\def\Ac{{\mathcal A}}
\def\Lc{{\mathcal L}}
\def\Mc{{\mathcal M}}
\def\Nc{{\mathcal N}}
\def\Rbb{{\mathbb R}}
\def\Sbb{{\mathbb S}}
\def\Vc{{\mathcal V}}
\def\Zbb{{\mathbb Z}}
\def\0{{\bf 0}}
\newcommand{\bitem}{\begin{itemize}}
\newcommand{\eitem}{\end{itemize}}
\newcommand{\btabular}{\begin{tabular}}
\newcommand{\etabular}{\end{tabular}}
\newcommand{\bcenter}{\begin{center}}
\newcommand{\ecenter}{\end{center}}
\newcommand{\bea}{\begin{eqnarray}}
\newcommand{\eea}{\end{eqnarray}}
\newcommand{\bean}{\begin{eqnarray*}}
\newcommand{\eean}{\end{eqnarray*}}
\newcommand{\ba}{\left[ \begin{array}}
\newcommand{\ea}{\\ \end{array} \right]}
\newcommand{\bear}{\begin{array}}
\newcommand{\eear}{\\ \end{array}}
\newcommand{\non}{\nonumber}
\newcommand*{\QEDB}{\hfill\ensuremath{\blacksquare}}%
\newcommand*{\QET}{\hfill\ensuremath{\triangleleft}}
\newcommand{\norm}[1]{\left\lVert#1\right\rVert}
\newcounter{subequation}
\def\beasub{\addtocounter{equation}{+1}
\setcounter{subequation}{\value{equation}}
\setcounter{equation}{0}
\renewcommand{\theequation}{\arabic{subequation}\alph{equation}}
\begin{eqnarray}}
\def\eeasub{\end{eqnarray}
\setcounter{equation}{\value{subequation}}
\renewcommand{\theequation}{\arabic{equation}}}
\def\inf{\operatornamewithlimits{inf\vphantom{p}}}
\newtheorem{Lemma}{Lemma}
\newtheorem{Theorem}{Theorem}
\newtheorem{Definition}{Definition}
\newtheorem{Assumption}{Assumption}
\newtheorem{Remark}{Remark}
\newtheorem{Proposition}{Proposition}
\title{\LARGE \bf
Bilateral Cognitive Security Games in Networked Control Systems under Stealthy Injection Attacks
}
\author{Anh Tung Nguyen, Quanyan Zhu, and Andr{\'e} Teixeira% <-this % stops a space
\thanks{This work is supported by the Swedish Research Council under the grant 2021-06316 and by the Swedish Foundation for Strategic Research.}% <-this % stops a space
\thanks{Anh Tung Nguyen and Andr{\'e} Teixeira are with the Department of Information Technology, Uppsala University, PO Box 337, SE-75105, Uppsala, Sweden
{\tt\small \{anh.tung.nguyen, andre.teixeira\}@it.uu.se}.}%}%
\thanks{Quanyan Zhu is with the Department of Electrical and Computer Engineering, New York University, Brooklyn, NY, 11201, USA {\tt\small qz494@nyu.edu.}}
}
\begin{document}

\maketitle
\thispagestyle{empty}
\pagestyle{empty}

%%%%%%%%%%%%%%%%%%%%%%%%%%%%%%%%%%%%%%%%%%%%%%%%%%%%%%%%%%%%%%%%%%%%%%%%%%%%%%%%
\begin{abstract}
This paper studies a strategic security problem in networked control systems under stealthy false data injection attacks. 
The security problem is modeled as a bilateral cognitive security game between a defender and an adversary, each possessing cognitive reasoning abilities. The adversary with an adversarial cognitive ability strategically attacks some interconnections of the system with the aim of disrupting the network performance while remaining stealthy to the defender. Meanwhile, the defender with a defense cognitive ability strategically monitors some nodes to impose the stealthiness constraint with the purpose of minimizing the worst-case disruption caused by the adversary. Within the proposed bilateral cognitive security framework, the preferred cognitive levels of the two strategic agents are formulated in terms of two newly proposed concepts, cognitive mismatch and cognitive resonance. Moreover,
we propose a method to compute the policies for the defender and the adversary with arbitrary cognitive abilities.  A sufficient condition is established under which an increase in cognitive levels does not alter the policies for the defender and the adversary, ensuring convergence.  The obtained results are validated through numerical simulations.

\end{abstract}

%%%%%%%%%%%%%%%%%%%%%%%%%%%%%%%%%%%%%%%%%%%%%%%%%%%%%%%%%%%%%%%%%%%%%%%%%%%%%%%%
\section{INTRODUCTION}
Networked control systems (NCSs) constitute the backbone of critical infrastructure, including power grids, transportation networks, and water distribution systems \cite{molzahn2017survey,polycarpou2023smart,conti2021survey}. The integration of such systems with open communication platforms such as public internet protocols and wireless technologies, has, however, introduced significant vulnerabilities to cyber-physical attacks. The consequences of successful cyber intrusions in these domains are often severe, posing risks not only to economic stability but also to public safety and societal well-being. Historical incidents, such as the Stuxnet malware targeting Iranian industrial control systems in 2010 \cite{falliere2011w32} and the Industroyer cyberattack on Ukraine's power grid in 2016 \cite{kshetri2017hacking}, underscore the capacity of advanced cyber threats to inflict substantial damage. These events have heightened global awareness regarding the security of NCSs, rendering cybersecurity a critical and urgent research priority within the control systems community.

While traditional models often frame cyber-physical security as an interaction between fully rational agents within Stackelberg or Nash equilibria frameworks \cite{li2018false,yuan2019stackelberg,shukla2022robust,umsonst2021bayesian}, real-world defenders and adversaries are typically human operators or human-influenced entities, who exhibit bounded rationality and finite-depth reasoning \cite{kanellopoulos2019non,li2019decision}. This gap between theory and practice motivates the need for security models that more realistically capture the cognitive limitations \cite{camerer2004cognitive} and decision-making heuristics employed by both defenders and attackers.
In light of these challenges, this work advocates for a bilateral cognitive security game formulation \cite{huang2023cognitive}, where both players operate under finite cognitive hierarchies, resulting in a more nuanced and applicable framework for securing NCSs under stealthy false data injection attacks.

In this paper, we investigate the security of a continuous-time NCS represented by a graph, where each node corresponds to a one-dimensional subsystem. The system is subject to stealthy false data injection attacks that aim to compromise the integrity of inter-node communications. Specifically, the adversary strategically selects a subset of nodes from which to launch stealthy attacks, thereby maximizing the degradation of the system's overall performance. In response, a defender deploys a limited number of sensors to monitor selected node outputs, imposing stealth constraints that restrict the adversary’s ability to act without detection. The interplay between these two opposing agents gives rise to a security game over the network, as depicted in Fig.~\ref{fig:problem}. Within this framework, we study the system’s vulnerability and provide the following contributions:
\begin{enumerate}
    \item We propose a novel game-theoretic framework that incorporates the concepts from cognitive hierarchy models \cite{camerer2004cognitive} and Stackelberg prediction games \cite{bruckner2011stackelberg} 
    to model security interactions between an adversary and a defender operating with finite-depth reasoning. 
    \item A semidefinite programming (SDP) is developed to compute the policy and the maximum disruption for the adversary with finite-depth reasoning. 
    \item Given that the defender assumes the adversary operates at lower reasoning levels, we develop a mixed-integer SDP formulation to find the defender’s optimal monitoring strategy in response to the adversary’s attack policy.
\end{enumerate}

\textit{Notation:}
$\Rbb^n \, (\Rbb^n_{>0},\,\Rbb^n_{\geq 0})$ and $\Rbb^{n \times m}$ stand for sets of real (positive, non-negative) $n$-dimensional vectors and real $n$-by-$m$ matrices, respectively; $\Zbb_{\geq 0}$ denote a set of non-negative integers;
%the sets of real positive and real non-negative $n$-dimensional vectors are denoted as $\Rbb^n_{>0}$ and $\Rbb^n_{\geq 0}$; 
the set of $n$-by-$n$ symmetric matrices is denoted as $\Sbb^n$.
% We denote $A \succ(\succeq) B$ if $A-B$ is a positive definite (semi-definite) matrix.
We denote $A \preceq 0$ if $-A$ is a positive semi-definite matrix.
An $i$-th column of the $n$-by-$n$ identity matrix is denoted as $e_i$.
The space of square-integrable  functions is defined as $\Lc_{2} \triangleq \bigl\{f: \Rbb_{>0} \rightarrow \Rbb^n ~|~ \norm{f}^2_{\Lc_2 [0,\infty]} < \infty \bigr\}$ and the extended space be defined as $\Lc_{2e} \triangleq \bigl\{ f: \Rbb_{>0} \rightarrow \Rbb^n ~|~ \norm{f}^2_{\Lc_2 [0,H]} < \infty,~ \forall~ 0 < H < \infty \bigr\} $ where $\norm{f}_{\Lc_2 [0,H]}^2 \triangleq \int_{0}^{H} \norm{f(t)}_2^2 \, \text{d}t$.
The notation $\norm{f}^2_{\Lc_2}$  is used  as shorthand for the norm $\norm{f}_{\Lc_2 [0,H]}^2$ if the time horizon $[0,H]$ is clear from the context. We denote $\circ$ as an element-wise multiplication operator and $\textbf{1}$ as an all-one vector with an appropriate dimension. 
%%%%%%%%%%%%%%%%%%%%%%%%%%%%%%%%%%%%%%%%%%%%%%%%%%%%%%%%%%%%%%%%%%%%%%%%%%%%%%%%
% \newpage
\begin{figure}[!t]
    \centering
    \includegraphics[width=0.9\linewidth]{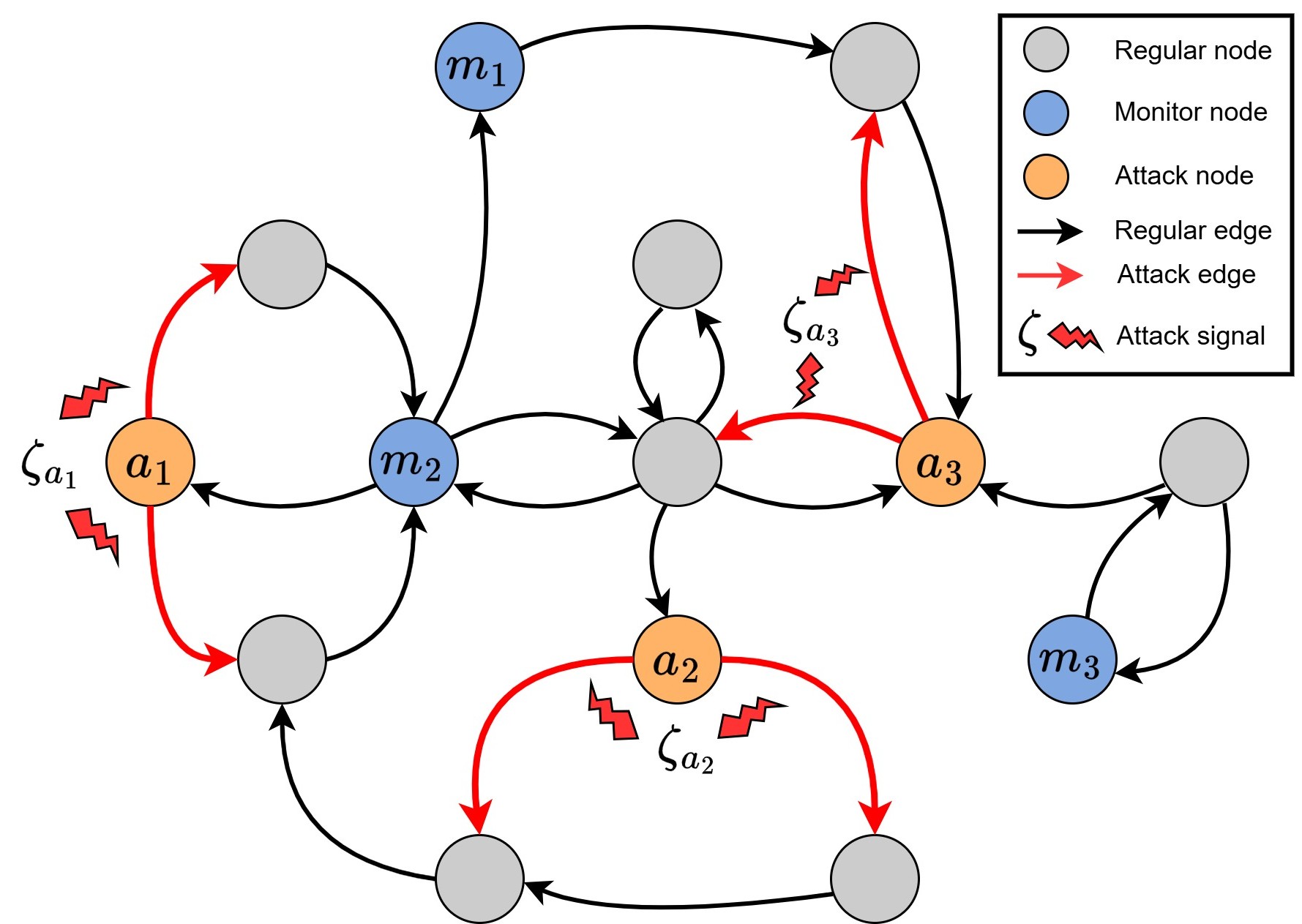}
    \caption{A networked control system under stealthy injection attacks. An adversary injects attack signals into the information sent from orange nodes to their neighbors while a defender monitors the outputs of blue nodes.}
    \label{fig:problem}
\end{figure}
%%%%%%%%%%%%%%%%%%%%%%%%%%%%%%%%%%%%%%%%%%%%%%%%%%%%%%%%%%%%%%%%%%%%%%%%%%%%%%%%
\section{PROBLEM DESCRIPTION}
\label{sec:problem_formulation}
In this section, we first describe an NCS, with a global performance metric, under stealthy attacks. 
Subsequently, a competition between an adversary and a defender with finite-depth reasoning levels is briefly described to show the main problem we study throughout the paper.
%modeled within the framework of Stackelberg prediction games.
%=============================================
%% New Subsection
%=============================================
\subsection{Networked control systems under attacks}
This subsection introduces an NCS in normal and attacked operations and the resources of the adversary and the defender. Subsequently, we formulate the performance loss as a quantification of malicious impacts on the system.  
\subsubsection{Networked control systems without attacks}
We consider the NCS depicted in Fig.~\ref{fig:problem} where each node has a one-dimensional dynamics for ease of exposition. The dynamics of each node can be further extended to be heterogeneous with different dimensions, which will be left for future extension. 
The dynamics of each node is described in the following form:
\begin{align}
	\dot x_i(t) &= A_{ii} x_i(t) + u_i(t), 
 ~ i \in \Vc \triangleq \bigl\{1,\,2,\ldots,\,N\bigr\},
	\label{sys:xi}
\end{align}
where $x_i(t) \in \Rbb$ and $u_i(t) \in \Rbb$ are the state and the local control input, respectively. 
The local parameter $A_{ii}$ is given. 
Each node $i \in \Vc$ is controlled by the following control law:
\begin{align}
	u_i(t) = -(\theta_i + A_{ii}) x_i(t) +  \sum_{j \, \in \, \Nc_i} A_{ij} \big(x_j(t) - x_i(t)\big),
	\label{sys:ui}
\end{align}
where $\theta_i \in \Rbb_{>0}$ is the control gain at node $i$ and $A_{ij} \in \Rbb_{>0}$ is the interaction gain of node $i$ with its neighbors. The set of neighboring nodes of node $i$ is denoted as $\Nc_i$. 

\subsubsection{Defense resources}
To detect potential malicious activities, the defender selects a subset of the node set $\Vc$ as a set of monitor nodes, denoted as $\Mc = \{m_1,m_2,\ldots,m_{| \Mc |} \} \subset \Vc$. A sensor is placed on each monitor node to monitor its state, where the number of utilized sensors should be constrained for practical reasons.
Let us denote $\beta \in \Rbb_{>0}$ as the sensor budget that is the maximum number of utilized sensors, i.e., 
\begin{align}
    |\Mc| \leq \beta. \label{sensor_budget}
\end{align} 
More specifically, the defender monitors the following output measurements:
\begin{align}
    y_m(t) = e^\top_m x(t),~~ \forall \, m \in \Mc. \label{sys:ym}
\end{align}
At each monitor node $m \in \Mc$, a corresponding alarm threshold $\delta_m \in \Rbb_{>0}$ is assigned. The presence of the adversary is detected if the output energy for a given time horizon $[0,H]$ of at least one monitor node crosses its corresponding alarm threshold, i.e.,
\begin{align}
    \norm{y_m}^2_{\Lc_2[0,H]} > \delta_m. \label{alarm_deltam}
\end{align}
%%
%For later uses, let us denote the vector of all the alarm thresholds as $\delta \triangleq [\delta_1,\,\delta_2,\ldots,\,\delta_N]^\top$.
Further, each node $i$ has a cost $\kappa_i \in \Rbb_{>0}$, resulting in the following sensor cost of the monitor set $\Mc$:
\begin{align}
    c_s(\Mc) = \sum_{m \, \in \, \Mc} \kappa^\top e_m, \label{cost_sensor}
\end{align}
where $\kappa = \left[ \kappa_1, \kappa_2, \ldots, \kappa_N \right]^\top$ is a vector of all sensor costs.

\subsubsection{Adversary resources}
The adversary selects exactly $\alpha~(\alpha \leq N)$ nodes on which to conduct false data injection attacks on the information sent from these $\alpha$ attack nodes to their neighbors (the orange nodes in Fig.~\ref{fig:problem}). More specifically, these $\alpha$ nodes are not directly affected by attacks, but their neighboring nodes are. Henceforth, these $\alpha$ nodes are called attack nodes.

Let us denote a set of $\alpha$ attack nodes as follows:
\begin{align}
    \Ac \triangleq \{ a_1, a_2, \ldots, a_{\alpha} \} \subset \Vc. \label{attack_set}
\end{align}
For each attack node $a_i \in \Ac$, the adversary designs an additive attack signal $\zeta_{a_i}(t)$ into the information sent from the attack node $a_i$ to all its neighbors, which is assumed to have bounded energy: 
\begin{align}
   \norm{\zeta_{a_i}}^2_{\Lc_2[0,\infty]}  \leq E < \infty, ~\forall \, a_i \in \Ac, \label{zeta_bounded}
\end{align}
where the maximum attack energy $E \in \Rbb_{>0}$ is given.
As a consequence, the control input in \eqref{sys:ui} under false data injection attacks can be represented as follows:
\begin{align}
    u^a_i(t) \triangleq u_i(t) + \begin{cases}
       A_{i \, a_i} \zeta_{a_i}(t),~\text{if}~ i \in  \Nc_{a_i} \,\&\, a_i \in \Ac, \\
        0, ~~ \text{otherwise}, 
    \end{cases} \label{sys:uia}
\end{align}
where $u_i(t)$ is given in \eqref{sys:ui} and the superscript ``$a$'' stands for signals subjected to attacks.

\subsubsection{Networked control systems under stealthy attacks}
Given the attack set $\Ac$ in \eqref{attack_set}, let us denote the corresponding attack input matrix as $B_\Ac \triangleq [Ae_{a_1},Ae_{a_2},\ldots,Ae_{a_{\alpha}}]$ and the attack signal vector as $\zeta(t) \triangleq [\zeta_{a_1}(t),\zeta_{a_2}(t),\ldots,\zeta_{a_{\alpha}}(t)]^\top$, where $A$ is defined as follows:
\begin{align*}
    A \triangleq \ba{cccc}
    0 & A_{12} & \ldots & A_{1N} \\
    A_{21} & 0 & \ldots & A_{2N} \\
    \vdots & \ldots & \ddots & \vdots \\
    A_{N1} & A_{N2} & \ldots & 0
    \ea.
\end{align*}
Note that $A_{ij} > 0$ if there is a connection between nodes $i$ and $j$ and $A_{ij} = 0$ otherwise.
For convenience, let us denote $x(t)$ as the state of the entire network, $x(t) \triangleq \big[x_1(t),~x_2(t),\ldots,~x_N(t)\big]^\top$. Therefore, the system \eqref{sys:xi} under the attacked control input \eqref{sys:uia} results the following system under attacks:
\begin{align}
    \dot x^a(t) &= A_c x^a(t)  + B_\Ac  \zeta(t), \label{sys:xa} \\
    p^a(t) &= W x^a(t), \label{sys:pa} \\
    y^a_m(t) &= e_m^\top x^a(t), ~ \forall m \in \Mc, \label{sys:yma}
\end{align}
where $p^a(t)$ is the performance output of the NCS and $y^a_m(t)$ is the monitor output under attacks at node $m$. The performance weighting factor $W \in \Rbb^{N \times N}$ is given and
$A_c \triangleq -\text{diag}([\theta_1, \theta_2, \ldots, \theta_N]) - A \textbf{1} + A$.

It is worth noting that the matrix $A_c$ in \eqref{sys:xa} alternatively represents the in-degree Laplacian matrix of a graph with self-loops \cite{west2001introduction}. The self-loops can be chosen such that $A_c$ is Hurwitz. As a result, the system can be assumed to converge to its equilibrium before being exposed to attacks, enabling us to use the following assumption.
\begin{Assumption}
    \label{assumption:x0}
    The system \eqref{sys:xa} is at its equilibrium $x_e = 0$ before being affected by the attack signal $\zeta(t)$.  \QET
\end{Assumption}

On the other hand, the energy constraint \eqref{zeta_bounded} allows us to assume that $x^a(\infty) = 0$ (see more detail in our previous work \cite[Sec. II]{nguyen2024Submit}).

Similar to robust control, the performance of the NCS for a given, possibly infinite, time horizon $[0,H]$ is formulated as follows:
\begin{align}
    J^a \triangleq \norm{p^a}_{\Lc_2[0,H]}^2.
    \label{sys:J}
\end{align}
From the adversary perspective, we assume that the purpose of the adversary is to maximally disrupt the performance \eqref{sys:J} subject to the model \eqref{sys:xa}-\eqref{sys:yma} while remaining stealthy to the defender {(see the discussion on the importance of the stealthiness in \cite[Sec. II.E]{umsonst2021bayesian}).
This adversarial purpose allows us to mainly focus on stealthy injection attacks which will be defined in the following. 
\begin{Definition}
    [Stealthy injection attacks] \label{def:stealthy_attacks}
    Consider the system
    \eqref{sys:xa}-\eqref{sys:yma} with monitor outputs $y_m^a(t) = e_m^\top x^a(t)$ for every $m \in \Mc$, which is a set of monitor nodes. The attack $\zeta(t)$ on the system \eqref{sys:xa}-\eqref{sys:yma} is defined as a stealthy injection attack if the following condition $\norm{y_m^a}^2_{\Lc_2} \leq \delta_m$ holds for all $m \in \Mc$. \QET
\end{Definition} 

The worst-case impact of stealthy injection attacks for an infinite time horizon, which is referred to as the worst-case disruption henceforth, is formulated as follows:
\begin{align}
    Q(\Mc, \Ac) \triangleq 
    ~&\sup_{ \{e_j^\top \zeta\}_{\forall j \in \{1,2,\ldots,\alpha\}} \in \Lc_{2e} }~
    \norm{p^a}_{\Lc_2}^2 
    \label{Q_sup} \\
    \text{s.t.}~&
    \eqref{sys:xa}-\eqref{sys:yma}, \, x^a(0) = 0, \, x^a(\infty) = 0, \non \\
    &\norm{y_m^a}^2_{\Lc_2} \leq \delta_m,~ \forall \, m \in \Mc, \non \\
    &\norm{e_j^\top \zeta}^2_{\Lc_2} \leq E,~ \forall \, j  \in \{1,2,\ldots, \alpha \}. \non    
\end{align}
%where $\zeta_a$ is the $a$-th entry of the attack signal $\zeta$.
The worst-case disruption \eqref{Q_sup} is also
called an Attack-Energy-Constrained Output-to-Output gain security metric \cite{gallo2025switching} for a given set of attack nodes $\Ac$ and a given set of monitor nodes $\Mc$. By observing \eqref{Q_sup}, we introduce a security problem considering the defender and the adversary as humans in the following.

\subsection{Bilateral Cognitive Security} 
From the game theory perspective \cite{bacsar1998dynamic}, the worst-case disruption \eqref{Q_sup} can be seen as a game payoff for the two strategic players, i.e., the defender chooses $\Mc$ to minimize \eqref{Q_sup} while the adversary selects $\Ac$ to maximize \eqref{Q_sup}. For a fixed $\Ac$, we can argue that the defender can outsmart the adversary by deviating the selection of $\Mc$ to obtain a smaller worst-case disruption. The same argument for the adversary remains true for a fixed $\Mc$.

However, in practical security settings, strategic agents do not necessarily exhibit full rationality or perfect strategic foresight. Instead, they, as humans, make decisions with finite-depth cognitive reasoning about their opponent’s thought process. This introduces a bilateral cognitive security problem, where the defender and the adversary operate under asymmetric cognitive hierarchies, each making strategic choices based on limited beliefs about the opponent’s reasoning level. To model these interactions, we introduce the Cognitive Hierarchy-$k$ (CH-$k$) framework and the Stackelberg prediction game, capturing how both players respond to their opponent’s strategic depth reasoning in the next section.

%=============================================
%% New Section
%=============================================
\section{BILATERAL COGNITIVE SECURITY GAME}
We introduce a mixture of the cognitive hierarchy model \cite{camerer2004cognitive} and the Stackelberg prediction game in the Machine Learning community \cite{bruckner2011stackelberg} where the defender acts as a leader and the adversary acts as a follower. 
The cognitive hierarchy model assumes that the strategic players base their actions on a finite depth of reasoning, which is called Cognitive Hierarchy, about the likely actions of the other players. On the other hand,
the Stackelberg prediction game deviates from the classical Stackelberg game \cite{bacsar1998dynamic} by relaxing the assumption that the follower certainly observes the leader's decision. Inspired by \cite{bruckner2011stackelberg}, we assume that the follower can compute the best response of the leader rather than directly observing the leader's action. 
% Consequently, we assume that the adversary (the follower) makes their decision after the defender (the leader) chooses their policy in the remainder of this paper.
Next, we describe how the players with different cognitive hierarchies interact with their opponents in the following definition.

\begin{Definition}
    [Asymmetric CH-\(k\) reasoning] \label{def:asym_chk}
    Let \( k \in \mathbb{Z}_{\geq 0} \) be a non-negative integer. A player (defender or adversary) is said to employ \textit{CH-\(k\) reasoning} if 
    \begin{enumerate}
        \item the defender chooses their strategy responding to strategies of the adversary with cognitive levels strictly less than \(k\).
        \item the adversary chooses their strategy responding to strategies of the defender with the same cognitive level \(k\). \QET
    \end{enumerate}
\end{Definition}

To facilitate the use of Definition~\ref{def:asym_chk},
let us denote the actions chosen by the defender and the adversary with CH-$k$ as $\Mc_k$ and $\Ac_k$ where $k \geq 0$, respectively.
We begin with the defender with zero cognitive level in the following.
\paragraph{CH-$0$ defender}
Followed by the literature on the cognitive hierarchy model \cite{camerer2004cognitive}, we assume that the CH-$0$ defender chooses the monitoring policy $\Mc^0$ randomly regardless of the existence of the adversary. 
% Note that the CH-$0$ adversary is neglected since it has no impact on the CH-$0$ defender nor the two strategic agents with higher cognitive hierarchies. Thus, we start considering the adversary with CH-$k$ ($k \geq 1$) in the following.
%%
% \paragraph{CH-1 defender}
% Given that $\Ac^0$ is uniformly randomly chosen, the CH-1 defender finds the best response by solving the following optimization:
% \begin{align}
%     \Mc^1 \triangleq \arg \min_{\Mc \in \Vc, |\Mc| \leq \beta} \bigg( c_s(\Mc) + \max_{\Ac \in \Vc, |\Ac| = \alpha} Q(\Mc, \Ac) \bigg), \label{find_M1}
% \end{align}
% which is referred to as minimizing the worst-case attack scenario. Further, the CH-1 defense policy \eqref{find_M1} is computed by solving a mixed-integer SDP in our previous work \cite[Proposition 1]{nguyen2024security}.
\paragraph{CH-$k$ adversary ($k \geq 0$)}
Inspired by the Stackelberg prediction game framework \cite{bruckner2011stackelberg},
the CH-$k$ adversary ($k \geq 0$) finds the best response $\Ac_k$ against the CH-$k$ defender choosing $\Mc_k$. Therefore, the adversary solves the following optimization problem:
\begin{align}
    \Ac_k \triangleq \underset{\Ac \in \Vc, |\Ac| = \alpha}{\arg\max}Q(\Mc_k, \Ac), \label{def_Ak}
\end{align}
where $Q(\Mc_k, \Ac)$ is the worst-case disruption for a given pair of $(\Mc_k, \Ac)$ defined in \eqref{Q_sup}.
% The CH-$k$ adversarial policy \eqref{def_Ak} is also referred to as the best response in the game theory literature.

\paragraph{CH-$k$ defender ($k \geq 1$)}
The CH-$k$ defender chooses monitor set $\Mc_k$ while considering the adversary has lower cognitive hierarchies, i.e., the CH-$i$ adversary for all $i \in [0, k-1]$.
%with some probabilities,
% \tcr{i.e., CH-$i$ adversary with a probability $p_k(i)~(0\leq p_k(i) \leq 1)$ where $i \in \{1,2,\ldots,k-1\}$, $\sum_{i = 1}^{k-1} p_k(i) = 1$, and $p_k(i) = 0~ (i \geq k)$. 
% Based on the literature on cognitive hierarchy in economics with experiments \cite{chong2005cognitive} and in sensor security \cite{kanellopoulos2021bounded}, the belief $p_k(i)$ is formulated as follows:
% \begin{align}
%     p_k(i) \triangleq \frac{\lambda^i ~ e^{-\lambda}}{i!} ~ \bigg( \sum_{i = 1}^{k-1} \frac{\lambda^i ~ e^{-\lambda}}{i!} \bigg)^{-1}, \label{def_pki}
% \end{align}
% where $\lambda$ is the Poisson parameter. This parameter $\lambda$ normally falls into the range $[1,\,2]$ in economics scenarios \cite{chong2005cognitive}. Add a figure?}
% Note that this rationality formulation can contain the conservative case where only CH-($k$-1) adversary is considered which is referred to as level-k reasoning players in \cite{kanellopoulos2019non}. 
Consequently, the CH-$k$ defender finds their best response by solving the following optimization problem: 
\begin{align}
    \Mc_k \triangleq \underset{\Mc \in \Vc, |\Mc| \leq \beta}{\arg \min} 
    R(\Mc, \{\Ac_i\}_{\forall i \in [0, k-1]}).
     \label{def_Mk}
\end{align}
Here
\begin{align*}
    R(\Mc, \{\Ac_i\}_{\forall i \in [0, k-1]}) \triangleq \bigg[  c_s(\Mc) + \max_{i \in [0, k-1]} Q(\Mc, \Ac_i) \bigg],
\end{align*}
where $c_s(\Mc)$ is the cost of sensors defined in \eqref{cost_sensor} and the second term is the worst-case disruption against the adversary with different CHs.
Note that this rationality formulation can contain the conservative case where only CH-($k$-1) adversary is considered which is referred to as level-k reasoning players in \cite{kanellopoulos2019non}. 
%%
%The CH-$k$ defense policy \eqref{def_Mk} is also referred to as the best response in games with incomplete information \cite{harsanyi1967games}. 
For a later use, let us represent the policy $\Mc_k$ of the CH-$k$ defender as a binary variable $z_k \in \{0,1\}^N$ such that the following equality holds true:
\begin{align}
    z_k \triangleq \sum_{m \in \Mc_k} e_{m}. \label{def_zk}
\end{align}

Based on the strategies chosen by the players (defender or adversary) with finite-depth reasoning, we introduce the reasoning outcome in the following.
\begin{Definition}
    [CH-\((k,\ell)\) reasoning outcome] \label{def:reasoning_outcome}
    Let \( k, \ell \in \mathbb{Z}_{\geq 0} \) denote the cognitive reasoning levels of the defender and adversary, respectively. A \textit{CH-\((k,\ell)\) reasoning outcome} is defined as the following tuple
    \begin{align}
        (\Mc_k, \Ac_\ell, Q(\Mc_k, \Ac_\ell)), \label{reasoning_outcome}
    \end{align}
    where:
    \begin{itemize}
    \item \( \Mc_k \) is the monitoring strategy selected by a CH-\(k\) defender,
    \item \( \Ac_\ell \) is the attack strategy selected by a CH-\(\ell\) adversary,
    \item \( Q(\Mc_k, \Ac_\ell) \) is the resulting worst-case disruption. \QET
    \end{itemize}
\end{Definition}

The reasoning outcome defined in \eqref{reasoning_outcome} captures the joint strategic consequence of asymmetric cognitive reasoning and serves as a metric for evaluating the system's resilience against mismatched strategic depth reasoning between the defender and the adversary. The possibly mismatched reasoning between players is categorized in the following two definitions.
\begin{Definition}
    [Cognitive mismatch] \label{def:cognitive_mismatch}
    Let the defender employ CH-\(k\) reasoning and the adversary employ CH-\(\ell\) reasoning. A \textit{cognitive mismatch} occurs when the reasoning model assumed by one player about the other’s behavior deviates from the actual reasoning level employed by that player. \QET
\end{Definition}
\begin{Definition}
    [Cognitive resonance] \label{def:cognitive_resonance}
    Let the defender employ CH-\(k\) reasoning and the adversary employ CH-\(\ell\) reasoning. A \textit{cognitive resonance} occurs if the reasoning models used by both players accurately reflect the actual strategies and reasoning levels of their opponents. \QET
\end{Definition}

The following propositions suggest how deep cognitive reasoning the defender and the adversary should employ to benefit the reasoning outcome.
\begin{Proposition}
   \label{prop:adv_mismatch} 
    Let \( \Mc_k \) be a fixed CH-\(k\) monitoring policy. Suppose the attacker at CH-\(\ell\) chooses their policy defined in \eqref{def_Ak}. 
    The CH-\((k,\ell)\) reasoning outcome \( (\Mc_k, \Ac_\ell, Q(\Mc_k, \Ac_\ell)) \) gains no benefit for the adversary if there is a
    \textit{cognitive mismatch} $\ell \neq k$ , i.e.,
    \begin{align}
        Q(\Mc_k, \Ac_\ell) \leq Q(\Mc_k, \Ac_k) ~~\forall \ell \neq k. \tag*{\QET}
    \end{align}
% \[
% A_\ell = \arg\max_{A \in \mathcal{A}} Q(M_k, A),
% \]
% where \( \mathcal{A} = \{ A \subseteq V : |A| = \alpha \} \), and the attacker has full knowledge of \( M_k \). Then, the outcome \( Q(M_k, A_\ell) \) is non-decreasing in \( \ell \).
\end{Proposition}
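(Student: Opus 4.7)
The plan is to exploit the fact that $\Ac_k$ in \eqref{def_Ak} is defined precisely as a maximizer of $Q(\Mc_k, \cdot)$ over the feasible set of attack node subsets, and then to observe that $\Ac_\ell$ is a feasible (but not necessarily optimal) element of that same set. With $\Mc_k$ held fixed, the inequality $Q(\Mc_k, \Ac_\ell) \leq Q(\Mc_k, \Ac_k)$ reduces to the statement that an argmax dominates any admissible competitor.

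First, I would recall from \eqref{def_Ak} that for every integer $r \geq 0$, the CH-$r$ adversary's policy $\Ac_r$ is chosen from the combinatorial set $\{ \Ac \subset \Vc : |\Ac| = \alpha \}$. In particular, taking $r = \ell$, we obtain $\Ac_\ell \subset \Vc$ with $|\Ac_\ell| = \alpha$, so $\Ac_\ell$ itself is a feasible candidate for the maximization problem defining $\Ac_k$. Crucially, the feasibility constraint in \eqref{def_Ak} depends only on cardinality and the node set $\Vc$, and not on the monitor set $\Mc_k$, so $\Ac_\ell$ remains admissible when we evaluate the objective at $\Mc_k$ rather than at $\Mc_\ell$.

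Second, I would invoke the optimality of $\Ac_k$ stated in \eqref{def_Ak}. Since $\Ac_k \in \arg\max_{\Ac} Q(\Mc_k, \Ac)$ over the feasible set and $\Ac_\ell$ belongs to that set, the defining property of the argmax yields
\begin{align*}
Q(\Mc_k, \Ac_\ell) \leq \max_{\Ac \subset \Vc,\, |\Ac| = \alpha} Q(\Mc_k, \Ac) = Q(\Mc_k, \Ac_k),
\end{align*}
which is the desired inequality. The hypothesis $\ell \neq k$ is not used in the algebra; it merely rules out the trivial equality case and ensures that the statement carries nontrivial content about cognitive mismatch.

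The main obstacle is, in truth, essentially absent at the technical level: once the variational definition of $\Ac_k$ in \eqref{def_Ak} is unpacked, the claim is a one-line consequence. The real substance of the proposition is interpretive rather than derivational: it formalizes the fact that when the adversary's CH-$\ell$ best response is constructed against a hypothesized defender policy $\Mc_\ell$ that does not match the actual policy $\Mc_k$, the adversary forfeits the optimality guarantee and cannot outperform the CH-$k$ adversary who reasons correctly about $\Mc_k$. I would close the proof by pointing out this interpretation to motivate Definition~\ref{def:cognitive_mismatch} in the surrounding narrative.
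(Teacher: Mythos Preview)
Your proposal is correct and follows essentially the same approach as the paper: both arguments invoke that $\Ac_k$ is by definition a maximizer of $Q(\Mc_k,\cdot)$ over the feasible set $\{\Ac\subset\Vc:|\Ac|=\alpha\}$, and that $\Ac_\ell$ lies in this set, so the inequality is immediate. Your version is simply more explicit about why $\Ac_\ell$ is feasible and adds an interpretive remark, but the mathematical content is identical.
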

\begin{proof}
    By definition, $\Ac_k$ is a maximizer of the optimization problem \eqref{def_Ak}. As a result, another policy $\Ac_\ell$ deviated from $\Ac_k$ with the same cardinality, i.e., $|\Ac_\ell| = |\Ac_k|$, does not gain more worst-case disruption. 
\end{proof}

\begin{Proposition}
    \label{prop:def_mismatch}
    Suppose that the cost \eqref{cost_sensor} is the same for all the sensors.
    Let $\Ac_\ell$ be a fixed CH-$\ell$ adversary policy. Suppose the defender at CH-$k$ chooses their policy defined in \eqref{def_Mk}. The CH-\((k,\ell)\) reasoning outcome \( (\Mc_k, \Ac_\ell, Q(\Mc_k, \Ac_\ell)) \) has a non-negative benefit for the defender if the defender has a one-level higher cognitive reasoning level than the adversary $k = \ell + 1$, i.e.,
    \begin{align}
        Q(\Mc_{\ell+1}, \Ac_\ell) \leq Q(\Mc_\ell, \Ac_\ell). \tag*{\QET}
    \end{align}
\end{Proposition}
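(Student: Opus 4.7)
The plan is to exploit two structural facts: $\Ac_\ell$ is the adversary's best response to $\Mc_\ell$ by construction, so it attains the worst case over all lower-level adversaries against $\Mc_\ell$; and $\Mc_{\ell+1}$ is designed to be optimal against an enlarged consideration set that explicitly contains $\Ac_\ell$. Chaining these two observations, the defender at level $\ell+1$ cannot do worse than the defender at level $\ell$ on the specific attack $\Ac_\ell$, once the sensor-cost terms are neutralized by the equal-cost hypothesis.

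First, I would invoke the definition of $\Ac_\ell$ in \eqref{def_Ak} as a maximizer of $Q(\Mc_\ell, \cdot)$ to obtain $Q(\Mc_\ell, \Ac_\ell) \geq Q(\Mc_\ell, \Ac_i)$ for every $i \in [0, \ell-1]$, which yields
\begin{align*}
\max_{i \in [0, \ell]} Q(\Mc_\ell, \Ac_i) = Q(\Mc_\ell, \Ac_\ell).
\end{align*}
Next, I would apply the optimality of $\Mc_{\ell+1}$ in \eqref{def_Mk} against the feasible competitor $\Mc_\ell$ (both respect $|\Mc| \leq \beta$), and then lower-bound the maximum on the left by the single entry $i = \ell$, arriving at
\begin{align*}
c_s(\Mc_{\ell+1}) + Q(\Mc_{\ell+1}, \Ac_\ell) \leq c_s(\Mc_\ell) + Q(\Mc_\ell, \Ac_\ell).
\end{align*}

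Finally, I would use the equal-sensor-cost hypothesis to eliminate the $c_s$ terms and conclude. Writing $c_s(\Mc) = \kappa |\Mc|$ for a common per-sensor cost $\kappa$, and noting that $Q(\Mc, \Ac)$ is monotone nonincreasing in $\Mc$ (adding a monitor only tightens the stealthiness constraints in \eqref{Q_sup}, shrinking the adversary's feasible set), I expect both defender problems to saturate the budget, so $|\Mc_\ell| = |\Mc_{\ell+1}| = \beta$ and hence $c_s(\Mc_\ell) = c_s(\Mc_{\ell+1}) = \kappa\beta$. Cancelling these equal terms from the displayed inequality yields the claim.

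The hard part will be the last step: rigorously justifying budget saturation, since strictly it requires that the marginal drop in $Q$ from adding any sensor at the defender's optimum exceeds $\kappa$. Absent a structural argument, one would need to assume saturation outright, or else argue more carefully that $c_s(\Mc_{\ell+1}) \geq c_s(\Mc_\ell)$ so that the sign of the cost difference does not spoil the final inequality.
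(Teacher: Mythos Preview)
Your argument is essentially identical to the paper's: chain the optimality of $\Mc_{\ell+1}$ in \eqref{def_Mk} against the feasible competitor $\Mc_\ell$, use Proposition~\ref{prop:adv_mismatch} to collapse $\max_{i\in[0,\ell]} Q(\Mc_\ell,\Ac_i)$ to $Q(\Mc_\ell,\Ac_\ell)$, lower-bound the left-hand max by the $i=\ell$ term, and cancel sensor costs. Regarding your final-paragraph concern, the paper simply asserts $c_s(\Mc_\ell)=c_s(\Mc_{\ell+1})$ as an immediate consequence of the equal-cost hypothesis without arguing budget saturation, so your caution there is in fact more scrupulous than the paper's own proof.
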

\begin{proof}
    From \eqref{def_Mk} and $c_s(\Mc_\ell) = c_s(\Mc_{\ell+1})$, one obtains $\max_{i\in[0,\ell]} Q(\Mc_{\ell+1},\Ac_i) \leq \max_{i \in [0,\ell]} Q(\Mc_{\ell}, \Ac_i)$. On the other hand, $\max_{i \in [0,\ell]} Q(\Mc_{\ell}, \Ac_i) = Q(\Mc_{\ell}, \Ac_{\ell})$ from Proposition~\ref{prop:adv_mismatch} while by definition $Q(\Mc_{\ell+1},\Ac_\ell) \leq \max_{i\in[0,\ell]} Q(\Mc_{\ell+1},\Ac_i)$. Consequently, one obtains $Q(\Mc_{\ell+1},\Ac_\ell) \leq Q(\Mc_{\ell}, \Ac_{\ell})$, concluding the proof.
\end{proof}

The result presented in Proposition~\ref{prop:adv_mismatch} suggests the adversary to employ the cognitive resonance (see Definition~\ref{def:cognitive_resonance}). On the other hand, the result of Proposition~\ref{prop:def_mismatch} encourages the defender to have a cognitive mismatch (see Definition~\ref{def:cognitive_mismatch}) to gain more benefits rather than having a cognitive resonance.

\begin{Remark}
    In risk management, the defender must account for the worst-case scenario, where their defense policy may be exposed to the adversary. Moreover, the defender lacks precise knowledge of the adversary’s cognitive reasoning depth. The formulation in \eqref{def_Mk} addresses these uncertainties by ensuring robustness against different adversary strategies. On the other hand, adversaries often gather information about system dynamics and defensive measures before making their decisions. The formulation in \eqref{def_Ak} captures the worst-case scenario where the adversary has perfect information, providing the defender with valuable insights for refining their strategy. This framework allows the defender to enhance security measures by considering high-cognitive-reasoning adversaries and improving resilience against informed attacks.  \QET
\end{Remark}

In the following section, we show how the CH-$k$ policies for the two strategic agents are computed.

\section{COGNITIVE HIERARCHY POLICY COMPUTATION}
This section provides a method to compute CH-$k$ policies of the adversary \eqref{def_Ak} and the defender \eqref{def_Mk}. Then, an algorithm is provided to show a procedure for computing a policy for the defender with an arbitrary cognitive hierarchy.

\subsection{CH-$k$ adversary policy}
First, the worst-case disruption \eqref{Q_sup} considers the impact of finite energy attack signals on a stable system, it is always bounded and tractable for any pair of $\Ac$ and $\Mc$ \cite[Lemma 2]{nguyen2024Submit}. Next, for a fixed CH-$k$ defense policy $\Mc_k$, the maximum worst-case disruption \eqref{def_Ak} is computed by the following lemma, which is improved from \cite[Lemma 3]{nguyen2024Submit}. 
\begin{Lemma}
    \label{lem:adversary_chk}
    Suppose the CH-$k$ defender chooses $\Mc_k$ denoted as $z_{k}$ in \eqref{def_zk}. For each admissible attack set $\Ac \subset \Vc~(|\Ac| = \alpha)$, 
    %and a monitor node $m \in \Mc$, 
    a tuple of variables $( \gamma_\Ac,\,\psi_\Ac,P_\Ac, \epsilon_\Ac) \, \in \, \Rbb^N_{>0} \times \Rbb_{>0}^{\alpha_k} \times \Sbb^N \times \Rbb_{\geq 0}$ is defined correspondingly. 
    The maximum worst-case disruption \eqref{def_Ak} is denoted as $Q_k$, which is the optimal solution to the following SDP:
    \begin{align}
    &\min_{Q_k,\, \{
    \gamma_\Ac,\,\psi_\Ac,\,P_\Ac, \epsilon_\Ac \}_{\forall  \Ac} }~~ r Q_k - \sum_{\forall \Ac} \epsilon_\Ac \label{Q_sdp}  \\
    \text{s.t.}
    &~~~~~
    Q_k \in \Rbb_{>0}, ~
    \gamma_\Ac \in \Rbb^N_{>0},~ \psi_\Ac \in \Rbb_{>0}^{\alpha}, \non \\
    &~~~~~
    P_\Ac \in \Sbb^{N}, ~ \epsilon_\Ac \in \Rbb_{\geq 0}, \non \\
    &~~~~~  
    \delta^\top \gamma_\Ac
    + E \, \textbf{1}^\top  \psi_\Ac + \epsilon_\Ac  \leq Q_k, \non \\
    &~~~~~
    \ba{cc}
    A_c^\top P_\Ac + P_\Ac A_c + W^2 & P_\Ac B_\Ac \\
    B_\Ac^\top P_\Ac & -\, \textbf{diag}(\psi_\Ac)
     \ea 
    \non \\
    &~~~~~
    -  \textbf{diag} \Bigg(\ba{c} 
    \gamma_\Ac \circ z_{k} \\ 0 \ea \Bigg) \preceq 0,
    ~~\forall \Ac. \non 
    \end{align}
    Here, $r$ is a given large positive number.
    Further, the CH-$k$ adversary policy $\Ac_k$ is found such that its corresponding solution $\epsilon_{\Ac_k}$ equals to zero, i.e., $\epsilon_{\Ac_k} = 0$.
    \QET
\end{Lemma}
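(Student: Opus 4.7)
The plan is to establish the lemma in two stages: first, to show that for each fixed admissible attack set $\Ac$, the worst-case disruption $Q(\Mc_k, \Ac)$ is upper-bounded by the tuple $(\gamma_\Ac, \psi_\Ac, P_\Ac)$ satisfying the stated LMI together with the scalar constraint $\delta^\top \gamma_\Ac + E\,\textbf{1}^\top \psi_\Ac + \epsilon_\Ac \le Q_k$; second, to show that the weighted objective $r Q_k - \sum_\Ac \epsilon_\Ac$ correctly identifies the maximizing attack set as the one whose slack variable vanishes.

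For the first stage, I would introduce, for each $\Ac$, a quadratic storage function $V_\Ac(x) = x^\top P_\Ac x$ and dualize the worst-case disruption problem \eqref{Q_sup} using nonnegative Lagrange multipliers $\gamma_\Ac$ for the stealthiness constraints $\norm{y_m^a}^2_{\Lc_2} \le \delta_m$ with $m \in \Mc_k$, and $\psi_\Ac$ for the energy constraints $\norm{e_j^\top \zeta}^2_{\Lc_2} \le E$. Because $A_c$ is Hurwitz and Assumption~\ref{assumption:x0} together with the finite-energy property yields $x^a(0)=x^a(\infty)=0$, we may add $\int_0^\infty \dot V_\Ac\,dt = 0$ to the Lagrangian without changing its value. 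Requiring the resulting integrand to be nonpositive for every $(x^a,\zeta)$ produces exactly the displayed LMI, where the diagonal correction $\textbf{diag}([\gamma_\Ac \circ z_k;\,0])$ encodes that only nodes in $\Mc_k$ contribute stealthiness terms (since $e_m^\top z_k = 1$ iff $m \in \Mc_k$). Integrating then gives $\norm{p^a}^2_{\Lc_2} \le \delta^\top \gamma_\Ac + E\,\textbf{1}^\top \psi_\Ac$, so any feasible tuple produces a valid upper bound; tightness of the bound follows from the strong-duality result already established in \cite[Lemma 2]{nguyen2024Submit}, which I would invoke rather than rederive.

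For the second stage, since the outer problem must simultaneously bound $Q_k$ from above by every $Q(\Mc_k, \Ac)$, the per-$\Ac$ SDPs are stitched together by imposing one copy of the LMI and the scalar inequality for every admissible $\Ac$. Taking $r$ strictly larger than the number of admissible attack sets times any a priori bound on the slacks $\epsilon_\Ac$ makes the objective effectively lexicographic: the optimizer first drives $Q_k$ down to $\max_{\Ac}Q(\Mc_k,\Ac)$, and then, with $Q_k$ fixed, maximizes each $\epsilon_\Ac$. For the maximizing attack set $\Ac_k$ the scalar inequality must saturate with $\epsilon_{\Ac_k}=0$, whereas every non-maximizing $\Ac$ admits strictly positive slack $\epsilon_\Ac = Q_k - Q(\Mc_k,\Ac) > 0$. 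Hence $\Ac_k$ is identified by the condition $\epsilon_{\Ac_k}=0$.

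The main obstacle will be rigorously justifying that the LMI characterization is not merely an upper bound but is achieved, i.e., that the S-procedure relaxation used to dualize the integral quadratic constraints is lossless in this setting. I would handle this by appealing to the Hurwitz property of $A_c$ and the zero boundary conditions on $x^a$, which together make the KYP-style argument applicable; the detailed justification is exactly what is carried out in \cite[Lemma 2]{nguyen2024Submit}. A secondary, more technical point is verifying that the chosen value of $r$ is large enough to guarantee the lexicographic behavior; a uniform a priori bound on $\epsilon_\Ac$ (for instance, via the worst-case disruption associated with the empty monitor set) suffices.
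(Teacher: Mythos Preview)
Your proposal is correct and follows essentially the same route as the paper: Lagrangian dualization of \eqref{Q_sup} with multipliers $(\gamma_\Ac,\psi_\Ac)$, a dissipativity/storage-function argument to obtain the LMI, an appeal to the lossless S-procedure for tightness, and then the lexicographic combination via $rQ_k-\sum_\Ac\epsilon_\Ac$ to force $\epsilon_{\Ac_k}=0$ at the maximizing set. The only cosmetic discrepancy is the citation: the paper invokes \cite[Lemma~3]{nguyen2024Submit} (not Lemma~2) for the per-$\Ac$ SDP reduction and \cite[Ch.~4]{petersen2000robust} for the losslessness of the S-procedure.
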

\begin{proof}
    See Appendix~\ref{app:pflem_adversary_chk}. 
\end{proof}

The result of Lemma~\ref{lem:adversary_chk} provides us a method to compute the maximum disruption and the best response of the CH-$k$ adversary policy $\Ac_k$ for a given monitor set $\Mc_k$ chosen by the CH-$k$ defender. 
% After solving \eqref{Q_sdp}, the CH-$k$ adversarial policy $\Ac_k$ is found such that its corresponding tuple of variables $( \{\gamma_{\Ac_k}^m\}_{\forall m \, \in \, \Mc },\,\psi_{\Ac_k})$ satisfies the following equality:
% \begin{align}
%     \sum_{m \, \in \, \Mc}  \gamma_{\Ac_k}^m \, \delta_m  
%     + E \, \textbf{1}^\top  \psi_{\Ac_k}  = Q(\Mc_k,\Ac_k), \label{find_Ak}
% \end{align}
% whose the right-hand side is the optimal value of \eqref{Q_sdp}.

\begin{Remark}
    Note that finding the best response for the adversary falls outside the scope of our previous work \cite{nguyen2024Submit}. 
    Lemma~\ref{lem:adversary_chk} improves the result reported in \cite[Lemma 3]{nguyen2024Submit} by introducing a new variable $\epsilon_\Ac$, which enables us to find the best response for the CH-$k$ adversary policy. Moreover, since \eqref{Q_sdp} is only interested in $\epsilon_{\Ac_k} = 0$, we can add a constraint $\epsilon_\Ac \leq \bar \epsilon$ where $\bar \epsilon$ is a very small positive scalar. This addition could remove unnecessary computations for other variables $\epsilon_{\Ac}~(\Ac \neq \Ac_k)$ in practice. 
    \QET
\end{Remark}

%%==========Algorithm=================
\begin{algorithm}[!t]
    \caption{CH-$q$ policies} \label{al:chq_policies}
    \begin{algorithmic}[1]
   \Statex{{\bf Output:}} The policies for the CH-$q$ defender and the CH-$q$ adversary ($q \in \Zbb_{\geq 0}$).
   \Statex{{\bf Input:}} System matrix $A_c$, performance weighting factor $W$, alarm threshold $\delta$, cost of sensors $\kappa$, maximum attack energy $E$.
   %, and Poisson parameter $\lambda$.
   %%
   \Statex{{\bf Initialization:}} CH-$0$ defense policy $\Mc^0$ is chosen randomly and $k = 0$.
   \State Solve \eqref{Q_sdp}.
   \State Find $\Ac_k$ for the CH-$k$ adversary by checking $\epsilon_{\Ac_k} = 0$. 
   \State $k = k + 1$.
   \If {$k \leq q$} 
   \State Update CH-$i$ adversary policy $\{\Ac_i\}_{\forall i \in [0,k-1]}$.
   %the belief $p_k(i)$ in \eqref{def_pki}.
   \State Solve \eqref{R_sdp} to obtain $z_{k}$.
   \State Extract $\Mc_k$ from $z_{k}$ via \eqref{def_zk}. 
   \State \textbf{Back} to Step 1.
   \Else 
   \State \textbf{Return} $\Ac_k$ for the CH-$q$ adversary,
   \State \textbf{Return} $\Mc_k$ for the CH-$q$ defender.
   \EndIf
   \end{algorithmic}
\end{algorithm}
%%==========Algorithm=================

\subsection{CH-$k$ defender policy}
For a given CH-$i$ adversary policy $\Ac_i~(i \in [0, k-1])$, the following theorem presents how the CH-$k$ defender finds their policy \eqref{def_Mk}.
\begin{Theorem}
    \label{thm:chk_defender}
    % Suppose that the CH-$k$ defender has a belief in the CH-$i$ adversary for $1 \leq i \leq k-1$ defined in \eqref{def_pki}. 
    For each CH-$i$ adversary choosing the policy $\Ac_i$, a tuple of variables $(\omega_{i},\psi_{i},P_{i}) \, \in \, \Rbb^N_{\geq 0} \times \Rbb_{>0}^{\alpha} \times \Sbb^N$ is defined correspondingly. 
    Recall the CH-$k$ defense policy $\Mc_k$ is denoted as a binary variable $z_k \in \{0,1\}^N$ in \eqref{def_zk}, which is the optimal solution to the following mixed-integer SDP problem:
    % \begin{align}
    % &\min_{z_k, \{\omega_{i},\, \psi_{i},\, P_{i} \}_{1 \leq i \leq k-1}  }  \kappa^\top z_k 
    % +\sum_{i = 1}^{k-1}  p_{i}  \big( \delta^\top \omega_{i} + E \textbf{1}^\top \psi_{i} \big)    \label{R_sdp} \\
    % &\text{s.t.}~
    % z_k \in \{0,1\}^N, \,  \omega_{i} \in \Rbb_{\geq 0}^N, \, \psi_{i} \in \Rbb^{\alpha}_{>0}, 
    % P_{i} \in \Sbb^{N}, \non \\
    % &~~~~~
    % \textbf{1}^\top z \leq \beta, \, \omega_{i} \, \leq \,  M_\infty z_k, \, 
    % \non \\ 
    % &~~~~~
    % \ba{cc}
    % A_c^\top P_{i} + P_{i} A_c + W^2 & P_{i} B_{\Ac_i} \\
    % B_{\Ac_i}^\top P_{i} & -\, \textbf{diag}(\psi_{i})
    % \ea \non \\
    % &~~~~
    % - \textbf{diag} \Bigg(\ba{c} 
    % \omega_{i} \\ 0 \ea \Bigg) \preceq 0, 
    % ~0 \leq \forall i \leq k-1, \non
    % \end{align}
    \begin{align}
    &\min_{z_k, R_k, \{\omega_{i},\, \psi_{i},\, P_{i} \}_{\forall i \in [0, k-1]}  }  ~~ \kappa^\top z_k 
    + Q_k    \label{R_sdp} \\
    &\text{s.t.}~
    z_k \in \{0,1\}^N, \,  \omega_{i} \in \Rbb_{\geq 0}^N, \, \psi_{i} \in \Rbb^{\alpha}_{>0}, 
    P_{i} \in \Sbb^{N},  \non \\
    &~~~~~
    Q_k \in \Rbb_{> 0}, ~\textbf{1}^\top z \leq \beta, \, \omega_{i} \, \leq \,  M_\infty z_k, \, 
    \non \\ 
    &~~~~~ \delta^\top \omega_{i} + E \textbf{1}^\top \psi_{i} \leq Q_k,
    \non \\
    &~~~~~
    \ba{cc}
    A_c^\top P_{i} + P_{i} A_c + W^2 & P_{i} B_{\Ac_i} \\
    B_{\Ac_i}^\top P_{i} & -\, \textbf{diag}(\psi_{i})
    \ea \non \\
    &~~~~
    - \textbf{diag} \Bigg(\ba{c} 
    \omega_{i} \\ 0 \ea \Bigg) \preceq 0, 
    ~\forall i \in [0, k-1], \non
    \end{align}
    where 
    $\kappa = [\kappa_1,\kappa_2,\ldots,\kappa_N]^\top \in \Rbb^N_{>0}$ is a given cost vector of sensors, $\delta = \big[ \delta_1, \delta_2, \ldots, \delta_N \big]^\top \in \Rbb^N_{>0}$
    is a given alarm threshold vector of all the nodes,
    $M_\infty$ is a given large positive number, also called a ``big M'' \cite{milovsevic2023strategic},
    $\textbf{1}$ stands for an all-one vector with a proper dimension, and $B_{\Ac_i}$ corresponds to the CH-$i$ adversary policy $\Ac_i$. 
    \QET
\end{Theorem}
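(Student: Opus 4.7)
My plan is to derive \eqref{R_sdp} by combining an epigraph reformulation of \eqref{def_Mk} with the SDP characterization of the worst-case disruption from Lemma~\ref{lem:adversary_chk}, and then linearizing the resulting bilinearity in $z_k$ via a standard big-M argument.

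First I would rewrite the outer problem \eqref{def_Mk} in indicator-variable form. By \eqref{def_zk}, a monitor set $\Mc$ with $|\Mc|\leq\beta$ is in one-to-one correspondence with a binary vector $z_k \in \{0,1\}^N$ satisfying $\mathbf{1}^\top z_k \leq \beta$, and the sensor cost \eqref{cost_sensor} becomes $c_s(\Mc) = \kappa^\top z_k$. The finite max in $R(\Mc,\{\Ac_i\}_{\forall i \in [0,k-1]})$ is then handled by the epigraph trick: introduce a scalar $Q_k \in \Rbb_{>0}$ and impose $Q(\Mc, \Ac_i) \leq Q_k$ for every $i \in [0, k-1]$, so that minimizing $\kappa^\top z_k + Q_k$ over this family of inequalities is equivalent to minimizing $R(\Mc,\{\Ac_i\}_{\forall i \in [0,k-1]})$.

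Next, for each individual inequality $Q(\Mc, \Ac_i) \leq Q_k$, I would invoke the inner relaxation used inside the proof of Lemma~\ref{lem:adversary_chk}, where the $\epsilon$-variable is no longer needed since here we do not have to single out a maximizing attack set. This supplies, for every $i \in [0,k-1]$, a tuple $(\gamma_i, \psi_i, P_i) \in \Rbb^N_{\geq 0} \times \Rbb^{\alpha}_{>0} \times \Sbb^N$ together with the LMI in the last constraint of \eqref{R_sdp} (with $\gamma_i \circ z_k$ in place of $\omega_i$) and the scalar inequality $\delta^\top \gamma_i + E\,\mathbf{1}^\top \psi_i \leq Q_k$, which jointly certify $Q(\Mc, \Ac_i) \leq Q_k$.

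The main obstacle is that the LMI inherited from Lemma~\ref{lem:adversary_chk} still contains the bilinear product $\gamma_i \circ z_k$ now that $z_k$ is itself a decision variable. To overcome this I would apply a big-M linearization: introduce $\omega_i \in \Rbb^N_{\geq 0}$ representing $\gamma_i \circ z_k$, enforce $\omega_i \leq M_\infty z_k$ with $M_\infty$ sufficiently large \cite{milovsevic2023strategic}, and substitute $\omega_i$ for $\gamma_i \circ z_k$ in the LMI and for $\gamma_i$ in the scalar inequality. The substitution is exact at optimum because, component-wise, whenever $z_k[j]=0$ the big-M constraint forces $\omega_i[j]=0$, matching the product, while whenever $z_k[j]=1$ the bound is slack for large enough $M_\infty$ and $\omega_i[j]$ can attain the optimal Lagrange multiplier value; since $\gamma_i$ does not appear outside this product, it can be eliminated in favor of $\omega_i$. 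Collecting the objective $\kappa^\top z_k + Q_k$, the budget $\mathbf{1}^\top z_k \leq \beta$, the linearization $\omega_i \leq M_\infty z_k$, the scalar inequalities $\delta^\top \omega_i + E\,\mathbf{1}^\top \psi_i \leq Q_k$, and the LMIs for each $i \in [0,k-1]$, yields precisely the MISDP \eqref{R_sdp}, completing the proof.
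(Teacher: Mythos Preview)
Your proposal is correct and follows essentially the same route as the paper: an epigraph reformulation of \eqref{def_Mk}, the SDP certificate from Lemma~\ref{lem:adversary_chk} for each $Q(\Mc,\Ac_i)\leq Q_k$, and a big-M linearization of $\gamma_i\circ z_k$ via $\omega_i$. The only minor addition in the paper's proof is a closing remark that \eqref{R_sdp} always admits a finite optimum since \eqref{Q_sdp} does and $\kappa^\top z_k\leq \beta\max(\kappa)<\infty$.
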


\begin{proof}
    See Appendix~\ref{sec:pfthm_defender_chk}.
\end{proof}

The result of Theorem~\ref{thm:chk_defender} enables us to find the optimal policy for the CH-$k$ defender by solving the mixed-integer SDP problem \eqref{R_sdp}, which can be done using the latest version of YALMIP \cite{lofberg2004yalmip}. 
Let us assume that the defense desires to compute the CH-$q$ policy where $q$ is given.
The procedure for how the CH-$q$ defense policy is computed is summarized in Algorithm~\ref{al:chq_policies}. In the following subsection, we discuss the convergence of CH-$q$ policies when $q$ increases.

\subsection{Convergence}
%%
% The defender and the adversary search for their policies from finite nodes representing the NCS (see Fig.~\ref{fig:problem}). Therefore, admissible policies are in a finite discrete space. 
In the following, we show a sufficient condition under which an increase in cognitive levels does not alter the policies for the defender and the adversary.
\begin{Proposition} [Convergence]
\label{prop:convergence}
     Consider Algorithm~\ref{al:chq_policies}, if the adversary does not change their policy in two consecutive CHs, i.e., 
    \begin{align}
    \Ac^{\ell+1} \equiv \Ac^{\ell}, \label{convergence_Ak}
    \end{align}
    then, the adversary and the defender do not alter their policies with CHs higher than $\ell$, i.e, $\Ac^p \equiv \Ac^\ell$ and $\Mc^p \equiv \Ac^\ell$ for all $p \geq \ell + 1$.
    \QET
\end{Proposition}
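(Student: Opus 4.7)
The plan is to proceed by induction on $p \geq \ell + 1$. The key observation is that the defender's mixed-integer SDP \eqref{R_sdp} depends on the previously computed adversary policies only through the family $\{\Ac_i\}_{i \in [0,k-1]}$ entering the worst-case disruption constraint, while the adversary's SDP \eqref{Q_sdp} depends only on the single monitor set $\Mc_k$. Consequently, if two consecutive iterations of Algorithm~\ref{al:chq_policies} feed the same inputs into either solver, the output is the same (up to a deterministic tie-breaking rule that I will discuss below).

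For the base case, I would exploit the hypothesis $\Ac_{\ell+1} = \Ac_\ell$ at the next defender step. The level-$(\ell+2)$ defender minimizes $c_s(\Mc) + \max_{i \in [0,\ell+1]} Q(\Mc,\Ac_i)$, but the inner index family $\{\Ac_0,\dots,\Ac_\ell,\Ac_{\ell+1}\}$ collapses to $\{\Ac_0,\dots,\Ac_\ell\}$, which is precisely the family used by the level-$(\ell+1)$ defender in \eqref{def_Mk}. Pointwise equality of the two objectives over every admissible $\Mc$ then forces the solver in Step~6 to return $\Mc_{\ell+2} = \Mc_{\ell+1}$. Feeding this monitor set back into Lemma~\ref{lem:adversary_chk} gives $\Ac_{\ell+2} = \arg\max_\Ac Q(\Mc_{\ell+1},\Ac) = \Ac_{\ell+1} = \Ac_\ell$. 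The inductive step is mechanically identical: assuming $\Mc_p = \Mc_{\ell+1}$ and $\Ac_p = \Ac_\ell$, the level-$(p+1)$ defender index family contributes no new element beyond $\{\Ac_0,\dots,\Ac_\ell\}$, whence $\Mc_{p+1} = \Mc_{\ell+1}$ and in turn $\Ac_{p+1} = \Ac_\ell$ by the same best-response argument.

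The main obstacle I anticipate is the non-uniqueness of optimizers: the mixed-integer SDP \eqref{R_sdp} and the SDP \eqref{Q_sdp} may each admit multiple minimizers/maximizers, so that the operators $\arg\min$ and $\arg\max$ are set-valued and two independent evaluations on the same instance need not return the same vector. I would address this by noting that Algorithm~\ref{al:chq_policies} invokes a single deterministic solver on structurally identical instances, which amounts to fixing a tie-breaking rule and is enough to propagate the equalities $\Ac_p \equiv \Ac_\ell$ and $\Mc_p \equiv \Mc_{\ell+1}$ through the induction. A remark to this effect at the start of the proof should make the argument rigorous without lengthening it.
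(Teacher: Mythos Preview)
Your proposal is correct and follows essentially the same argument as the paper: the hypothesis $\Ac_{\ell+1}=\Ac_\ell$ makes the defender's problem \eqref{def_Mk} identical at levels $\ell+1$ and $\ell+2$, hence $\Mc_{\ell+2}=\Mc_{\ell+1}$, which in turn forces $\Ac_{\ell+2}=\Ac_{\ell+1}$, and the rest propagates. Your version is in fact more careful than the paper's, since you make the induction explicit and flag the tie-breaking issue that the paper tacitly assumes away.
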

\begin{proof}
    If the condition \eqref{convergence_Ak} holds, the optimization \eqref{def_Mk} remains unchanged in two consecutive CH-$(\ell+1)$ and CH-$(\ell+2)$ for the defender. As a consequence, the defender policy with such two CHs remains unchanged, i.e., $\Mc^{\ell+2} \equiv \Mc^{\ell+1}$. This results in $\Ac^{\ell+2}  \equiv \Ac^{\ell+1}$ since the CH-$(\ell+2)$ and CH-$(\ell+1)$ adversaries react with the same defender policy. Therefore, the policies for the defender and the adversary remain unchanged for higher CHs.
\end{proof}

The condition \eqref{convergence_Ak} can also be used to stop Algorithm~\ref{al:chq_policies} without further computation before $k$ reaches $q$ since the policies remain unchanged.
In the worst-case scenario, the convergence presented in Proposition~\ref{prop:convergence} occurs at CH-$\binom{N}{\alpha}$ where $\binom{N}{\alpha}$ is the number of all the admissible attack sets.

Alternatively, the convergent defender policy can also be found by solving \eqref{def_Mk} against all the admissible attack sets, i.e.,
\begin{align}
    \hspace{-0.16cm}
    \Mc^{\binom{N}{\alpha}} \triangleq \underset{\Mc \in \Vc, |\Mc| \leq \beta}{\arg \min} \big[ c_s(\Mc) + \max_{\Ac \in \Vc, |\Ac| = \alpha} Q(\Mc, \Ac) \big]. \label{def_Mk_max}
\end{align}
In the next section, we run Algorithm~\ref{al:chq_policies} to examine how deep the reasoning level is to reach the convergent policies in numerical examples.

% Let us denote $\phi(\alpha) = \binom{N}{\alpha}$ as the number of admissible attack sets. Clearly, the defender with CH-$\phi(\alpha)$, who considers the attack policy of the CH-$i$ adversary $(i \in [0, \phi(\alpha)-1])$, may find their policy against all the admissible attack set. As a consequence, the CH-$\phi(\alpha)$ defender finds their policy by using the result of Theorem~\ref{thm:chk_defender} for the following optimization problem:
% \begin{align}
%     \Mc^{\phi(\alpha)} \triangleq \arg \min_{\Mc \in \Vc, |\Mc| \leq \beta} \big[ c_s(\Mc) + \max_{\Ac \in \Vc, |\Ac| = \alpha} Q(\Mc, \Ac) \big].
% \end{align}
% For higher CH-$k$ ($k > \phi(\alpha)$), the CH-$k$ adversary policy duplicates from their lower CHs since there are no more admissible attack sets. Consequently, the CH-$k$ ($k > \phi(\alpha)$) policy for the defender is the same as $\Mc^{\phi(\alpha)}$, i.e., $\Mc_k \equiv \Mc^{\phi(\alpha)} ~ (k > \phi(\alpha))$.

%%%%%%%%%%%%%%%%%%%%%%%%%%%%%%%%%%%%%%%%%%%%%%%%%%%%%%%%%%%%%%%%%%%%%%%%%%%%%%%%
\section{NUMERICAL EXAMPLES}
In this section, we examine the obtained results through a numerical example of a 10-node network (see Fig.~\ref{fig:problem}) with the number of attack nodes $\alpha = 3$ and the maximum number of monitor nodes $\beta = 3$. All the nodes have the same sensor cost of $1$.
We run Algorithm~\ref{al:chq_policies} to compute the CH-$q$ policy and the corresponding objective functions for the adversary \eqref{def_Ak} and defender \eqref{def_Mk} where $q$ is set at $19$. Numerical results are reported in Tab.~\ref{tab:num_result}. As the defender increases their cognitive level, they strategically adjust their policy to reduce the maximum worst-case disruption caused by the adversary. Conversely, the adversary adapts their attack strategy to maximize disruption as their cognitive reasoning deepens. However, once CH reaches 17, the adversary can no longer find a policy that further increases the worst-case disruption. This suggests that the defender, by increasing their cognitive level to 17, has effectively covered the most critical attack scenarios. This argument is further supported by the last row of Tab.~\ref{tab:num_result}, which presents the result of solving \eqref{def_Mk_max} while considering all admissible attack sets. Beyond CH-17, the strategies for both players remain unchanged, indicating convergence.

\begin{table}[!t]
    \centering
    \begin{tabular}{|c|c|c|c|c|}
    \hline
        CH-$k$ & $\Ac_k$ & $Q(\Mc_k,\Ac_k)$ & $\Mc_k$ & $R(\Mc_k, \{\Ac^{ i}\}_{\forall i < k})$  \\ \hline \hline
         0  & \{2,3,6\}  & 817 & \{1,2,3\}  & 322  \\
        1  & \{5,6,9\}  & 408 & \{4,5,8\}  & 16   \\
        2  & \{2,4,10\} & 390 & \{4,7,8\}  & 27   \\
        3  & \{4,5,10\} & 713 & \{1,4,5\}  & 41   \\
        4  & \{2,3,5\}  & 425 & \{3,7,8\}  & 52   \\
        5  & \{4,6,9\}  & 426 & \{4,5,10\} & 126  \\
        6  & \{5,7,9\}  & 673 & \{3,4,9\}  & 165  \\
        7  & \{3,7,10\} & 350 & \{4,9,10\} & 165  \\
        8  & \{3,8,10\} & 541 & \{2,7,8\}  & 166  \\
        9  & \{1,7,10\} & 469 & \{3,5,9\}  & 176  \\
        10 & \{5,6,10\} & 665 & \{5,7,9\}  & 182  \\
        11 & \{4,6,10\} & 322 & \{1,4,9\}  & 198  \\
        12 & \{1,3,10\} & 351 & \{3,7,9\}  & 229  \\
        13 & \{2,7,8\}  & 427 & \{7,9,10\} & 238  \\
        14 & \{2,6,8\}  & 415 & \{7,8,10\} & 240  \\
        15 & \{2,5,9\}  & 392 & \{3,9,10\} & 273  \\
        16 & \{5,6,7\}  & 390 & \{2,3,9\}  & 285  \\
        17 & \{2,4,10\} & 316 & \{4,7,9\}  & 319  \\
        18 & \{2,4,10\} & 316 & \{4,7,9\}  & 319  \\
        19 & \{2,4,10\} & 316 & \{4,7,9\}  & 319  \\
        \ldots & \ldots & \ldots & \ldots & \ldots \\ 
        $\binom{10}{3}$ & \{2,4,10\} & 316 & \{4,7,9\}  & 319 \\ \hline 
    \end{tabular}
    \caption{CH-$k$ policies and payoffs for the defender and the adversary.
    The last row is computed by solving \eqref{def_Mk_max}. All the cost values are rounded up to the nearest integers.
    }
    \label{tab:num_result}
\end{table}
%%
% \begin{figure} [!h]
%     \centering
%     \includegraphics[width=\linewidth]{}
%     \caption{Payoff for the defender and the adversary when increasing their cognitive hierarchies.}
%     \label{fig:10node_2v2}
% \end{figure}

The results regarding reasoning outcome presented in Propositions~\ref{prop:adv_mismatch}-\ref{prop:def_mismatch} are illustrated in Figs.~\ref{fig:cog_mismatch_Q}-\ref{fig:cog_mismatch_R}. 
In Fig.~\ref{fig:cog_mismatch_Q}, the defender's cognitive level is fixed while the adversary's cognitive level varies. We observe that the adversary achieves the highest reasoning outcome when their cognitive level matches that of the defender, a phenomenon referred to as \textit{reasoning resonance}. This suggests that when the adversary and defender operate at the same cognitive depth, the adversary is best able to exploit the defender’s strategic reasoning. Conversely, in Fig.~\ref{fig:cog_mismatch_R}, where the adversary's cognitive level is fixed while the defender's cognitive level varies, we see that a \textit{cognitive mismatch} tends to benefit the defender, resulting in a more favorable reasoning outcome. This highlights the defender's advantage in situations where they can outthink the adversary, reinforcing the importance of strategic cognitive depth in security decision-making.

\begin{figure}[!t]
    \centering
    \includegraphics[width=\linewidth]{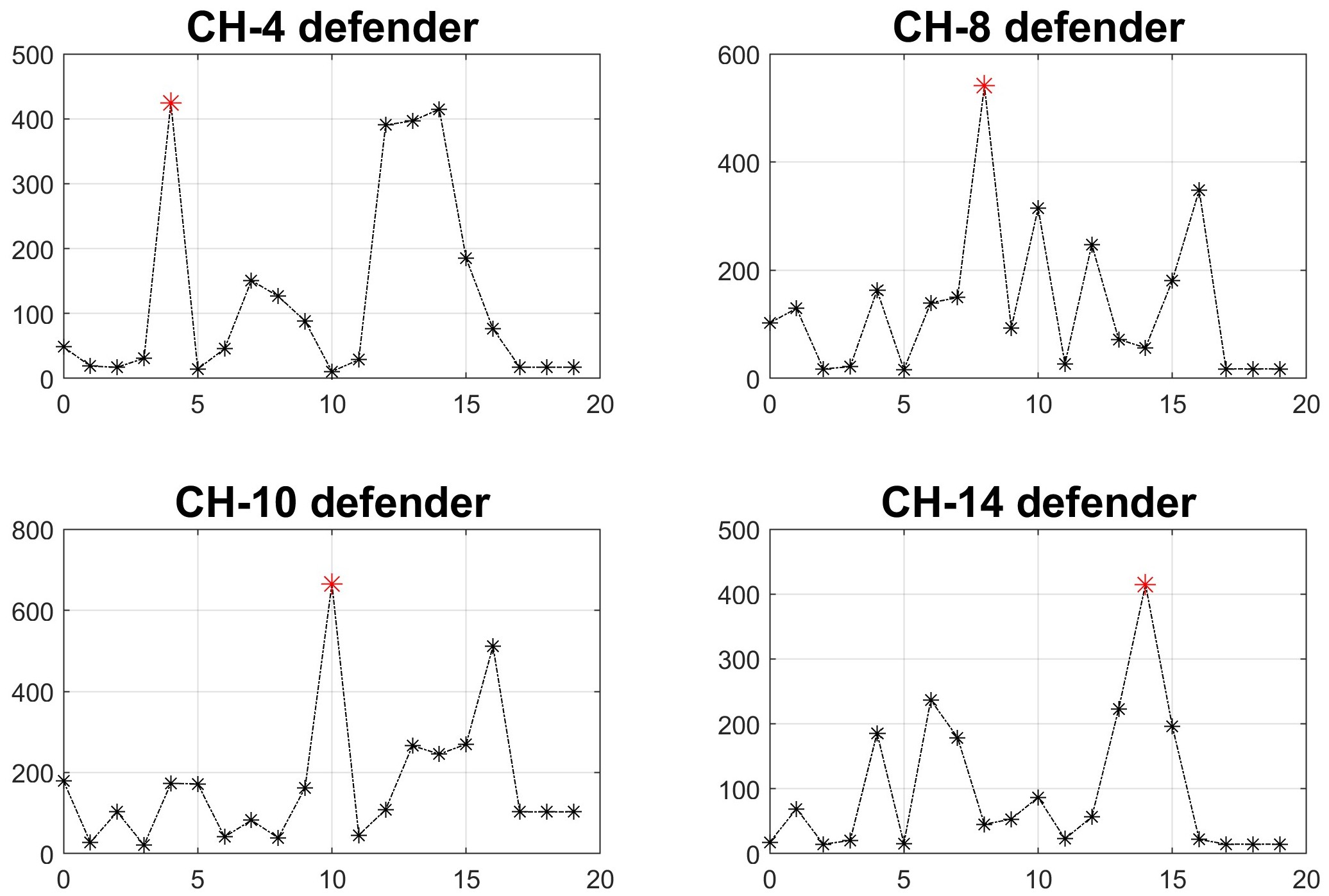}
    \caption{The horizontal axis represents the cognitive reasoning employed by the adversary while the vertical axis indicates the reasoning outcome. The \textit{reasoning resonance}, represented by red asterisks, yields better reasoning outcomes for the adversary.}
    \label{fig:cog_mismatch_Q}
    \includegraphics[width=\linewidth]{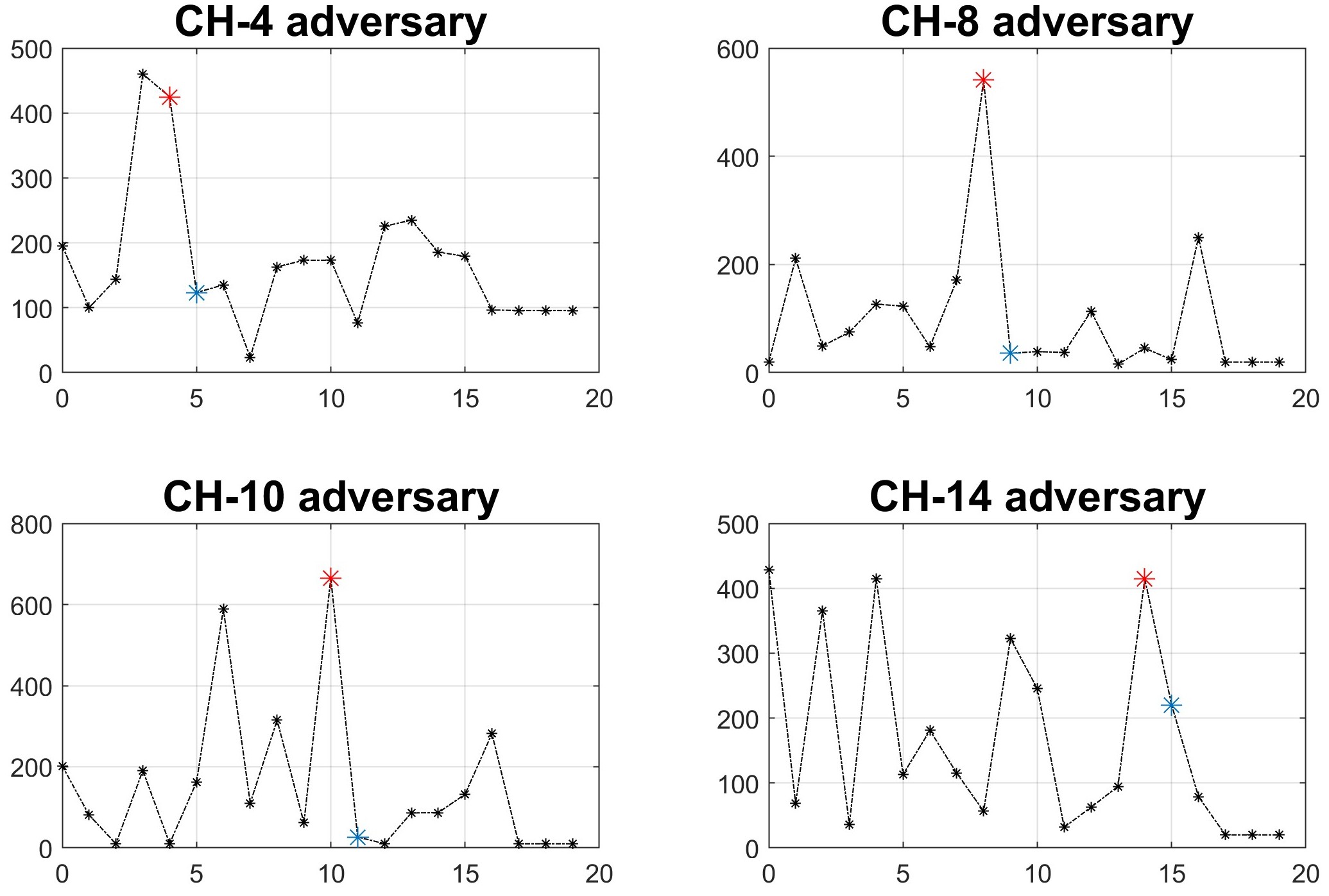}
    \caption{The horizontal axis represents the cognitive reasoning employed by the defender while the vertical axis indicates the reasoning outcome. While red asterisks represent the \textit{reasoning resonance}, blue asterisks indicate that the defender is one step ahead in cognitive reasoning compared to the adversary. The \textit{reasoning mismatch} yields better reasoning outcomes for the defender.}
    \label{fig:cog_mismatch_R}
\end{figure}

%%%%%%%%%%%%%%%%%%%%%%%%%%%%%%%%%%%%%%%%%%%%%%%%%%%%%%%%%%%%%%%%%%%%%%%%%%%%%%%%

\section{CONCLUSIONS}
In this paper, we studied a bilateral cognitive security game in networked control systems, where a strategic adversary launches stealthy false data injection attacks while a defender strategically monitors the system to mitigate disruptions. By integrating cognitive hierarchy models with Stackelberg prediction games, we developed an SDP-based framework to compute optimal attack and defense policies under finite-depth reasoning. We established a convergence condition, showing that increasing cognitive levels beyond a threshold does not alter the players’ strategies. Numerical simulations validated our findings, demonstrating that defenders can achieve optimal security policies while considering only a limited set of attack scenarios.

\section*{APPENDIX}
\subsection{Proof of Lemma~\ref{lem:adversary_chk}}
\label{app:pflem_adversary_chk}
Recall \eqref{def_Ak}, the maximum worst-case disruption $Q(\Mc_k, \Ac_k)$ is equal to $Q_k$. As a consequence, for any other attack set $\Ac$, one has
\begin{align}
    Q(\Mc_k, \Ac) \leq Q(\Mc_k, \Ac_k) = Q_k~~ \forall \Ac \in \Vc, |\Ac| = \alpha. \label{pf:Qk_upper}
\end{align}
By using the optimal variable $\epsilon_\Ac \in \Rbb_{\geq 0}$ for the corresponding attack set $\Ac$, \eqref{pf:Qk_upper} can be rewritten as follows:
\begin{align}
    Q(\Mc_k, \Ac) + \epsilon_\Ac = Q_k~~ \forall \Ac \in \Vc, |\Ac| = \alpha. \label{pf:Qk_equal}
\end{align}
From \eqref{pf:Qk_equal}, one obtain $\epsilon_{\Ac_k} = Q_k - Q(\Mc_k, \Ac_k) = 0$ by definition of $Q_k$. Therefore, finding $\epsilon_\Ac$ and $Q_k$ is equivalent to solving the following optimization problem
\begin{align}
    \max_{\{\epsilon_\Ac\}_{\forall \Ac} \in \Rbb_{\geq 0}} &~~\sum_{\forall \Ac} \epsilon_\Ac \label{opt_epsilon}\\
    \text{s.t.}~~~~& 
        Q(\Mc_k, \Ac) + \epsilon_\Ac \leq Q_k^\star, ~ \forall \, \Ac, \non 
        % & Q_k^\star = \min_{Q_k \in \Rbb_{>0}} ~~ Q_k  \non \\
    % \Bigg[ \begin{aligned}~~ \min_{Q_k \in \Rbb_{>0}}& ~~ Q_k  \\
    % \text{s.t.}~~&~~Q(\Mc_k, \Ac) + \epsilon_\Ac \leq Q_k, ~ \forall \, \Ac~~
    % \end{aligned}
    % \Bigg], \non
\end{align}
where
\begin{align}
    Q_k^\star = \min_{Q_k \in \Rbb_{>0}}& ~~ Q_k \non  \\ 
    \text{s.t.}~~&~~Q(\Mc_k, \Ac) \leq Q_k, ~ \forall \, \Ac. \non
\end{align}
Solving the two optimization problems in \eqref{opt_epsilon} is equivalent to solving the following optimization problem:
\begin{align}
    \min_{Q_k \in \Rbb_{>0}, \{\epsilon_\Ac\}_{\forall \Ac} \in \in \Rbb_{\geq 0}} &~~ r Q_k - \sum_{\forall \Ac} \epsilon_\Ac 
    \label{Qstar} \\
    \text{s.t.}~~~~~~~~~& Q(\Mc_k, \Ac) + \epsilon_\Ac \leq Q_k, ~ \forall \, \Ac. \non
\end{align}
Here $r$ is chosen as a very large positive number to emphasize the minimization on $Q_k$ and force at least one variable $\epsilon_\Ac$ to be zero simultaneously. The other non-zero values $\epsilon_\Ac$ show the gap between its corresponding worst-case disruption $Q(\Mc_k,\Ac)$ and $Q(\Mc_k,\Ac_k)$.

Next, we show how to compute \eqref{Q_sup} for each given pair of $\Mc_k$ and $\Ac$. The computation is adapted from our previous work \cite[Lemma 3]{nguyen2024Submit} and is reported in the following for a better flow. 
the worst-case disruption \eqref{Q_sup} has the dual form:
    \begin{align}
        &\inf_{ 
        \gamma_\Ac \,\in\, \Rbb^N_{>0} ,\, \psi_\Ac \,\in\, \Rbb^{\alpha}_{>0} } ~ \bigg[ ~ \sup_{ \zeta }~ \bigg\{
         \sum_{m \, \in \, \Mc_k} e_m^\top \gamma_\Ac \big( \delta_m - \norm{y_m^a}_{\Lc_2}^2 \big)
        \non \\
        &
        + \norm{p^a}_{\Lc_2}^2 
        + \sum_{j\,=\,1}^{\alpha} 
    e_j^\top
    \psi_\Ac \big( E - \norm{e_j^\top \zeta}_{\Lc_2}^2 \big) \bigg\}\bigg] \label{Q_inf} \\
        & 
        \text{s.t.}~~ \eqref{sys:xa}-\eqref{sys:yma}, \, x^a(0) = 0,\, x^a(\infty) = 0, \non 
    \end{align}
    where $\gamma_\Ac$ and $\psi_\Ac$ are Lagrange multipliers associated with the first and second inequality constraints in \eqref{Q_sup}, respectively.
    The dual form \eqref{Q_inf} is bounded only if 
    \begin{align}
        &\norm{p^a}_{\Lc_2}^2 - 
        \sum_{m  \in \Mc_k} e_m^\top \gamma_\Ac  \norm{y_m^a}_{\Lc_2}^2 - \sum_{j=1}^\alpha e_j^\top \psi_\Ac \norm{e_j^\top \zeta}_{\Lc_2}^2
          \leq  0, \non 
    \end{align}
    which results in the following optimization problem:
    \begin{align}
    Q(\Mc_k,\Ac) = &
    \inf_{\gamma_\Ac,\, \psi_\Ac}  ~~  \delta^\top \gamma_\Ac  + E \, \,\textbf{1}^\top  \psi_\Ac  \label{Q_min} \\ 
    \text{s.t.}~~~& 
    \eqref{sys:xa}-\eqref{sys:yma}, \, x^a(0) = 0, \, x^a(\infty) = 0, \non \\
    &\gamma_\Ac \, \in \, \Rbb^N_{>0},\, \psi_\Ac \, \in \, \Rbb_{>0}^{\alpha}, \non \\
    &  \hspace{-2cm}
    \norm{p^a}_{\Lc_2}^2 - \sum_{m  \in \Mc_k} e_m^\top \gamma_\Ac  \norm{y_m^a}_{\Lc_2}^2 - \sum_{j=1}^\alpha e_j^\top \psi_\Ac \norm{e_j^\top \zeta}_{\Lc_2}^2 \leq 0.
    \non 
    \end{align}
    The strong duality can be proven by utilizing the lossless S-Procedure \cite[Ch. 4]{petersen2000robust}. Recalling the key results in the dissipative system theory for linear systems \cite{trentelman1991dissipation} with a storage function $S(x^a) \triangleq (x^a)^\top P_\Ac x^a$, where $P_\Ac  \in \Sbb^N$, and a supply rate $s(\cdot,\cdot) \triangleq   \sum_{m  \in \Mc_k} e_m^\top \gamma_\Ac  \norm{y_m^a}_{\Lc_2}^2 + \sum_{j=1}^\alpha e_j^\top \psi_\Ac \norm{e_j^\top \zeta}_{\Lc_2}^2 - \norm{p^a}_{\Lc_2}^2$, we observe that the inequality constraint in \eqref{Q_min} is equivalent to the system being dissipative with respect to the supply rate $s(\cdot,\cdot)$. Hence, the inequality constraint in \eqref{Q_min} can be replaced with the equivalent dissipation inequality and 
    the optimization problem \eqref{Q_min} is translated into the following SDP problem 
    \begin{align}
    \hspace{-0.1cm}
        Q(\Mc_k,\Ac) = &
    \min_{\gamma_\Ac ,\, \psi_\Ac, \,P_\Ac}~  \delta^\top \gamma_\Ac  + E \, \,\textbf{1}^\top  \psi_\Ac \label{Q_min_sdp} \\ 
    \text{s.t.}~~~& \gamma_\Ac \, \in \, \Rbb^N_{>0},\, \psi_\Ac \, \in \, \Rbb_{>0}^{\alpha}, \, P_\Ac \in \Sbb^N, \non \\
    &
    \ba{cc}
    A_c^\top P_\Ac + P_\Ac A_c + W^2 & P_\Ac B_\Ac \\
    B_\Ac^\top P_\Ac & -\, \textbf{diag}(\psi_\Ac)
    \ea \non \\
    &
    - \textbf{diag} \Bigg(\ba{c} 
    \gamma_\Ac \circ z_{k} \\ 0 \ea \Bigg) \preceq 0. \non
    \end{align}
    Substituting \eqref{Q_min_sdp} into \eqref{Qstar} yields \eqref{Q_sdp}, which concludes the proof. \QEDB

\subsection{Proof of Theorem~\ref{thm:chk_defender}}
\label{sec:pfthm_defender_chk}
The proof is adapted from our previous work \cite[Theorem 1]{nguyen2024Submit} and is reported in the following for better reading. Since \eqref{def_Mk} considers the maximum worst-case disruption caused by the adversary with lower CHs, we can leverage the computation of \eqref{def_Ak} by replacing the admissible attack sets with the CH-$i$ adversary policies for all $i \in [0, k-1]$. As a consequence, the optimization problem \eqref{def_Mk} is equivalent to the following optimization:
\begin{align}
    \min_{\Mc_k \in \Vc, |\Mc_k| \leq \beta}&~~~ c_s(\Mc_k) + Q_k, \label{pf:R_min_upper} \\
    \text{s.t.}~~&~~~ Q(\Mc_k, \Ac_i) \leq Q_k ~~ \forall i \in [0, k-1]. \non
\end{align}

Let us recall the policy $\Mc_k$ denoted by $z_k \in \{0,\,1\}^N$ in \eqref{def_zk}.
On the other hand, the sensor budget \eqref{sensor_budget} and the cost of utilized sensors \eqref{cost_sensor} imply the following two constraints:
\begin{align}
    |\Mc_k| &= \textbf{1}^\top z_k \leq \beta, \non \\
    c_s(\Mc_k) &= \sum_{m \, \in \, \Mc_k} \kappa^\top e_m = \kappa^\top z_k. \label{zM_cost}
\end{align}

Next, the computation of $Q(\Mc_k, \Ac_i)$ is directly adopted from \eqref{Q_min_sdp} where the term $\gamma_i \circ z_k$ in the inequality constraint is replaced with the following constraints 
\begin{align}
    \omega_i = \gamma_i \circ z_k,~
    \omega_i \leq M_\infty z_k ~~ \forall i \in [0, k-1], \label{bigM}
\end{align}
where $M_\infty$ is a large positive number, also called ``big M''.

Finally, by substituting \eqref{Q_min_sdp}, \eqref{zM_cost}, and \eqref{bigM} into \eqref{pf:R_min_upper}, we obtain the mixed-integer SDP problem \eqref{R_sdp}. Moreover, since the optimization problem \eqref{Q_sdp} always admits a finite solution and $\kappa^\top z \leq \beta \max(\kappa) < \infty$, the mixed-integer SDP problem \eqref{R_sdp} always admits a finite solution. \QEDB

%%%%%%%%%%%%%%%%%%%%%%%%%%%%%%%%%%%%%%%%%%%%%%%%%%%%%%%%%%%%%%%%%%%%%%%%%%%%%%%%
\bibliographystyle{IEEEtran}
\bibliography{mybibfile}

\end{document}